\newcommand{\berivan}[1]{{\color{red}\textbf{BI}: #1}}
\newcommand{\berivan}[1]{}
\newtheorem{theorem}{\bf Theorem}[section]
\newtheorem{corollary}{\bf Corollary}[section]
\newtheorem{defn}{Definition}
\newcommand{\be}{\begin{equation}}
\newcommand{\ee}{\end{equation}}
\def \bX {\mathbf{X}}
\def \bN {\mathbf{N}}
\def \hX {\hat{X}}
\def \bhX {\mathbf{\hat{X}}}
\def \bZ {\mathbf{Z}}
\newcommand{\E}{\mathbb{E}} % expectation
\def\BibTeX{{\rm B\kern-.05em{\sc i\kern-.025em b}\kern-.08em
    T\kern-.1667em\lower.7ex\hbox{E}\kern-.125emX}}
\begin{document}
\title{Lossy Compression of Noisy Data for \\ Private and Data-Efficient Learning}
\author{Berivan Isik and~Tsachy Weissman,~\IEEEmembership{Fellow,~IEEE}% <-this % stops a space
\thanks{B. Isik and T. Weissman are with the Department of Electrical Engineering, Stanford University, Stanford, CA 94305, USA
(e-mail: berivan.isik@stanford.edu; tsachy@stanford.edu).}
\thanks{Published at the IEEE Journal on Selected Areas in Information Theory (JSAIT). Preliminary version \cite{isik2022learning} was presented at the IEEE International Symposium on Information Theory (ISIT), 2022.}
}

\maketitle

\begin{abstract}
Storage-efficient  privacy-preserving learning is crucial due to increasing amounts of sensitive user data required for modern learning tasks. We propose a framework for reducing the storage cost of user data while at the same time providing privacy guarantees, without essential loss in the utility of the data for learning. Our method comprises noise injection followed by lossy compression. We show that, when appropriately matching the lossy compression to the distribution of the added noise, the compressed examples converge, in distribution, to that of the noise-free training data as the sample size of the training data (or the dimension of the training data) increases. In this sense, the utility of the data for learning is essentially maintained, while reducing storage and privacy leakage by quantifiable amounts. We present experimental results on the CelebA dataset for gender classification and find that our suggested pipeline delivers in practice on the promise of the theory: the individuals in the images are unrecognizable (or less recognizable, depending on the noise level), overall storage of the data is substantially reduced, with no essential loss (and in some cases a slight boost) to the classification accuracy. As an added bonus, our experiments suggest that our method yields a substantial boost to robustness in the face of adversarial test data.     
\end{abstract}

\begin{IEEEkeywords}
compression-based denoising, rate-distortion theory, empirical distribution, learning, privacy, robustness.
\end{IEEEkeywords}

%\IEEEpeerreviewmaketitle
\section{Introduction}
\label{intro}
One of the most crucial factors contributing to the recent success of machine learning is the wide availability  
of user data \cite{jordan2015machine}. However, relying on such data brings several challenges in storage and user privacy. While privacy-preserving methods for machine learning have been studied extensively, efficient storage of data for learning (a major problem even for synthetic datasets such as ImageNet~\cite{krizhevsky2012imagenet} and CelebA~\cite{celeba}) remains largely unexplored. In this work, we propose a framework to tackle the two problems jointly. 
%without a significant utility loss
 We seek to develop a storage-efficient privacy-guaranteeing processing procedure that preserves the utility of the data for learning.  

\begin{figure}[h]
\centering
\includegraphics[width=\columnwidth]{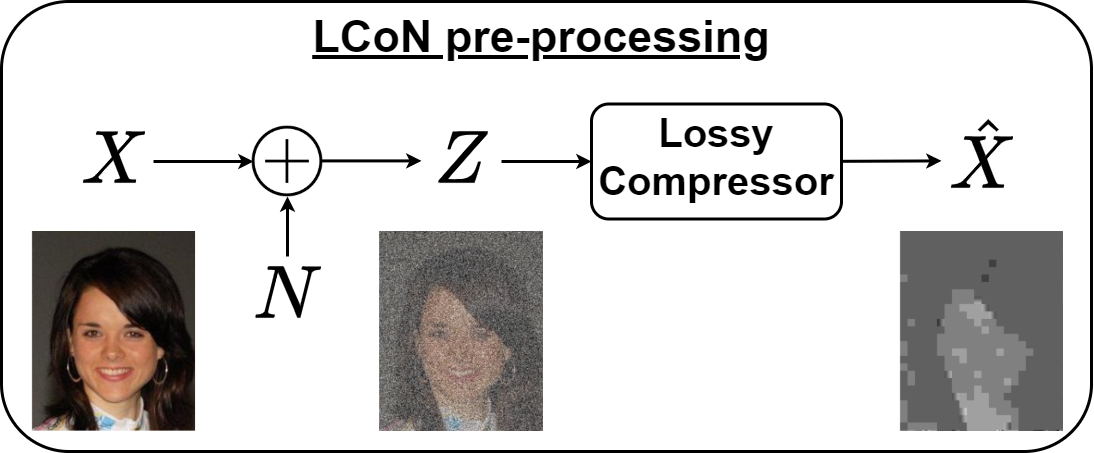}
\caption{Proposed data pre-processing framework. $X$: noise-free data, $N$: added noise, $Z$: noisy data, $\hat{X}$: reconstructions from lossy compression of the noisy data. $\hat{X}$ are then used for the learning, in lieu of $X$. A sample of noise-free, noise-injected, and lossily compressed images from the CelebA dataset are given at the bottom. Here JPEG compression with quality factor 1 is applied to (Gaussian) noisy images with $12$ dB PSNR.}
\label{fig:framework}
\end{figure}

To achieve this goal, we  first inject noise $\bN$ to the (learning data) examples $\bX$ and then lossily compress  the noisy examples $\bZ$ (see Fig.~\ref{fig:framework}). The reconstructions from the lossy compression of noisy (LCoN) examples $\bhX$ are then used for the learning.  
The lossy compression is done under a distortion criterion and level that are matched to the noise characteristics in a way we prescribe below.  
For data efficiency, we aim to achieve a compression rate close to the optimum, as characterized by the rate-distortion function associated with the noisy data. As for privacy, following \cite{makhdoumi2014information, sankar2013utility, du2012privacy, makhdoumi2013privacy, chatzikokolakis2010statistical, rebollo2009t, huang2018generative} and references therein, we guarantee an upper bound on the privacy leakage as measured by mutual information between the original data $\bX$ and that retained $\bhX$. \footnote{As is exemplified in Fig.~\ref{fig:framework}, the lossy compression step may result in a substantial further reduction of  the privacy leakage over merely noise corrupting the data.}  Our framework naturally preserves differential privacy as well \cite{dwork2006our}.   

We show that 
this procedure 
%of adding noise followed by good lossy compression under the appropriate distortion criterion and level 
achieves our goal, which   might seem surprising at first glance in light of results from the literature on privacy and robustness  showing significant degradation in performance of the trained model when data are corrupted by noise \cite{mcpherson2016defeating, 43405}. Nevertheless, in a sense we make precise, this problem is alleviated in our framework due to the effective denoising that occurs when noisy data are lossily compressed. More concretely, when the distortion criterion and level in the lossy compression are matched to the noise characteristics, the lossily compressed noisy data samples  converge, in distribution, to that of the noise-free data  because, in effect,  they are samples from the posterior distribution of the noise-free data given its noise-corrupted versions. 
The learning then is performed on data with the ``right'' statistics so, in principle, should entail no performance loss in the downstream inference tasks.

Our initial experimentation with gender classification on the CelebA dataset seems in agreement with the theory. 
For example,  one working point of our method decreases the cost of storing the data (in bits) by a factor of two, provides privacy guarantees by adding Gaussian noise (with varying variance where the individuals in the noisy images were unrecognized by the authors), while achieving better accuracy than the benchmark methods.  
Furthermore, our method yields substantial performance boosts over the benchmark methods when tested on  
 adversarially generated data. 

Our main contributions can be summarized as: 

\begin{enumerate}
\item We propose a framework for data-efficient privacy-preserving pre-processing that retains the utility/quality of the data for learning by essentially preserving its distributional properties. We call it LCoN pre-processing since it contains \textbf{L}ossy \textbf{Co}mpression of \textbf{N}oisy data. 

\item We present initial experimentation demonstrating the efficacy of our suggested pre-processing pipeline on the CelebA dataset not only with respect to the criteria that motivated its design, but also in providing robustness to adversarial data. 
    %\item We empirically show that our approach provides robustness against adversarial data.
\end{enumerate}
%%%%%%%%%%%%%%%% Commented for now
We provide a brief summary of the related work in Section~\ref{related}, introduce some notation and basic concepts of relevance in Section~\ref{preliminary}, recall a key result about the precise sense and conditions under which reconstructions from lossy compression of noisy data are samples from the posterior distribution of the noise-free given the noisy data in Section~\ref{samples_posterior}, build on this relation to guide and justify the construction of our proposed data pre-processing framework in Section~\ref{application}, present the experimental results in Section~\ref{experiments}, and conclude in Section~\ref{conclusion}.% with a summary and a description of related future and on-going research.  

\section{Related Work}
\label{related}
In this section, we briefly summarize the literature on empirical distribution of good codes, privacy, and robustness as they are related to our approach. 

\subsection{Empirical Distribution of Good Codes}
%\paragraph{Empirical Distribution of Good Codes} %Studies on empirical distribution of good codes for both channel coding and rate-constrained source coding reported useful results. 
%A good channel code is a code approaching capacity with vanishing probability of error while 
%A good rate-constrained source code is a code approaching a point on the rate-distortion curve. %It was proven for channel codes with unique capacity-achieving distribution that any good code has an empirical distribution that converges to the capacity-achieving distribution \cite{shamai1997empirical}. Later, 
Following the analogous findings for good channel codes (a code approaching capacity with vanishing probability of error) \cite{shamai1997empirical}, \cite{ordentlichweissman} proved a similar result for good rate-constrained source codes \cite{ordentlichweissman}. In particular, they proved 
that the empirical distribution of any good rate-constrained source code approaches to the joint distribution attaining a point in the rate-distortion curve whenever this joint distribution is unique. In the same work, they also proved the denoising property of good lossy compressors, which will be covered in more detail in later sections and will be the basis of our work.      

\subsection{Privacy}
%\paragraph{Privacy} %While there are a number of ways to model privacy-utility trade-offs under various settings such as local privacy and centralized privacy, 
There are a number of existing information-theoretic tools to measure privacy such as mutual information \cite{zhu2005anonymity, chatzikokolakis2008anonymity, chatzikokolakis2010statistical} and rate-distortion theory \cite{moraffah2015information, basciftci2016privacy, asoodeh2014notes, asoodeh2016information, bonomi2016information}. We use the mutual information as a measure of privacy leakage as proposed in \cite{makhdoumi2013privacy, makhdoumi2014information, liao2018tunable} to formulate privacy-utility trade-offs. 
Our work adopts the inference threat model introduced in  \cite{du2012privacy}, where a user has a private data $S \in \mathcal{S}$ correlated with $X \in \mathcal{X}$ and releases a distorted version of $X$ denoted as $\hat{X} \in \mathcal{\hat{X}}$ while an adversary selects a distribution $q$ from $\mathcal{P}_S$ that minimizes an inference cost function $C(S,q)$. This threat model was also adapted by \cite{makhdoumi2014information}, where the authors studied the privacy metric under log-loss (self-information) cost function $C(S,q) = - \log{q(s)}$ and showed that privacy leakage can be measured by the mutual information $I(S; \hat{X})$. The authors of \cite{makhdoumi2014information} further showed that while $I(S;\hat{X})$ is the exact privacy leakage under log-loss cost function, any bounded cost function can be upper bounded by a constant factor of $\sqrt{I(S;\hat{X})}$, indicating that minimizing $I(S;\hat{X})$ is a judicious goal toward mitigating  privacy leakage under any bounded cost function. We borrow the privacy measure from this work as the mutual information between the private and the released data and denote it as $I(X;\hat{X})$, taking $X=S$. 

Different from the mutual information privacy, differential privacy focuses on the problem of learning aggregate statistics by collecting data from \emph{several} users \cite{dwork2006differential, dwork2009differential}. In Section~\ref{LCON_privacy}, we briefly touch on differential privacy and show that our framework naturally provides local differential privacy guarantees in the case of specific noise distributions such as Gaussian and Laplacian.

\subsection{Robustness}
%\paragraph{Robustness} 
There has been significant interest in enhancing robustness of deep neural networks as they are known to be vulnerable against adversarial examples \cite{43405}. %This led to many model-dependent \cite{madry2018towards} and model-agnostic \cite{guo2018countering} robustness strategies. 
%\berivan{talk about gaussian noise and gaussian blur techniques \cite{45818}. JPEG quantized high-frequency components including adversarial effects.}
One robustness strategy that is most related to our work is image compression as a data pre-processing step \cite{ robustness1, robustness_jpeg1, robustness_jpeg3, robustness_Jpeg4, robustness_jpeg5}. The authors of \cite{robustness_jpeg1} studied the impact of JPEG compression on robustness against adversarial examples and found out that, under moderate compression rates, JPEG compression may enhance robustness. While their results were mostly empirical, their work accelerated research on modifying existing image compression methods to specifically eliminate adversarial effects \cite{robustness_Jpeg4}. 
Another related approach to robustness is injecting Gaussian noise to the adversarial data \cite{45818}. We note that our framework does not explicitly target robustness (we do not yet provide any theoretical guarantees for robustness) but it seems to naturally enhance it due to the noise injection and lossy compression steps (see Fig.~\ref{fig:framework}). Our experimental results confirm that our method also boosts robustness without essential loss (decrease) of utility (accuracy).    % In addition to the empirical studies on the effect of existing images compression methods (e.g. JPEG) in robustness against adversarial images \cite{robustness_jpeg1, robustness_jpeg3, robustness_jpeg5}, developing new image compression methods that specifically target eliminating the adversarial effects have also been an interest to the community \cite{robustness_Jpeg4}.

\section{Preliminaries}
\label{preliminary}
%In this section, we introduce the notation and present basic relations that will be needed in later sections.

\subsection{$k$-th order distribution induced by $P_{X^n}$}
\label{subsection kth order distribution induced}
We provide the definitions of $Q_{X}^{\sf{ave}, (n)}$ and $Q_{X^k}^{\sf{ave}, (n)}$ as follows.  

 \begin{defn}[$Q_{X}^{\sf{ave}, (n)}$] Consider a random $n$-tuple $X^n$. For $J \sim \mbox{Unif} \{ 1, 2, \ldots, n \} $ and independent of $X^n$, $Q_{X}^{\sf{ave}, (n)}$ is the law of $X_J$.
 \end{defn}
In other words, $Q_{X}^{\sf{ave}, (n)}$ denotes the distribution of the random variable obtained by choosing one of the $n$ components of $X^n$ at random, or the ``average''  of the marginal laws $\{ P_{X_i} \}_{i=1}^n$, and hence the superscript.  
  We write $Q_{X}^{\sf{ave}, (n)} [P_{X^n}]$ when we want to make its dependence on the law of $X^n$ explicit.

  \begin{defn}[$Q_{X^k}^{\sf{ave}, (n)}$] For $k \leq n$, and $J \sim \mbox{Unif} \{ 1, 2, \ldots, n - k + 1 \} $ independent of $X^n$, $Q_{X^k}^{\sf{ave}, (n)}$ denotes the law of the $k$-tuple $X_{J}^{J+k-1}$.
     
 \end{defn}
 
 In other words, $Q_{X^k}^{\sf{ave}, (n)}$ is the law  obtained by averaging the marginal $k$-tuple laws $\left\{ P_{X_i^{i+k-1}} \right\}_{i=1}^{n - k + 1}$. 
  %Hence, we sometimes refer to $Q_{X^k}^{\sf{ave}, (n)}$ as the ``(average) $k$-th order (marginal) distribution induced by $P_{X^n}$''.  
    We write $Q_{X^k}^{\sf{ave}, (n)} [P_{X^n}]$ when we want to make its dependence on $P_{X^n}$ explicit. We extend this notation in the obvious way to $Q_{X,Y}^{\sf{ave}, (n)} = Q_{X,Y}^{\sf{ave}, (n)} [ P_{X^n, Y^n} ] $ and 
  $Q_{X^k,Y^k}^{\sf{ave}, (n)} = Q_{X^k,Y^k}^{\sf{ave}, (n)} [ P_{X^n, Y^n} ] $.

\subsection{$k$-th order empirical distribution induced by $x^n$}
\label{subsection $k$-th order empirical distribution induced by $x^n$}
Now, we provide the definitions of the empirical distributions $Q_{X}^{\sf{emp}, (n)}$ and $Q_{X^k}^{\sf{emp}, (n)}$.

 \begin{defn}[$Q_{X}^{\sf{emp}, (n)}$] For a fixed finite-alphabet $n$-tuple $x^n$, $Q_{X}^{\sf{emp}, (n)} [ x^n]$ is a probability mass function (PMF) on the finite alphabet $\mathcal{X}$ in which the components of $x^n$ reside, with  
$Q_{X}^{\sf{emp}, (n)} [x^n](a)$ denoting the probability it assigns to $a \in \mathcal{X}$, namely the fraction of times the symbol $a$ appears along the $n$-tuple $x^n$. 
 \end{defn}
 
In words, $Q_{X}^{\sf{emp}, (n)}[x^n]$ denotes the empirical (first-order) distribution that $x^n$ induces. To simplify the notation, we suppress the dependence on $x^n$, using $Q_{X}^{\sf{emp}, (n)}$ when $x^n$ should be clear from the context.  

 \begin{defn}[$Q_{X^k}^{\sf{emp}, (n)}$] For $k \leq n$, $Q_{X^k}^{\sf{emp}, (n)} [ x^n]$ is a PMF of a $k$-tuple, with  
$Q_{X^k}^{\sf{emp}, (n)} [x^n](a^k)$ denoting the probability it assigns to $a^k \in \mathcal{X}^k$,  the fraction of times the $k$-tuple $a^k$ appears along the $n$-tuple $x^n$.
 \end{defn}
  
Equivalently, $Q_{X^k}^{\sf{emp}, (n)} [x^n]$ denotes the empirical distribution of $k$-tuples along $x^n$.  Here too we suppress the dependence on $x^n$ and write $Q_{X^k}^{\sf{emp}, (n)}$ when $x^n$ should be clear from the context. We extend this notation to $Q_{X,Y}^{\sf{emp}, (n)} = Q_{X,Y}^{\sf{emp}, (n)} [ x^n, y^n ]$ and 
  $Q_{X^k,Y^k}^{\sf{emp}, (n)} = Q_{X^k,Y^k}^{\sf{emp}, (n)} [ x^n, y^n ]$  in the obvious ways.

\subsection{Relationship between $Q_{X}^{\sf{ave}, (n)}$ and $Q_{X}^{\sf{emp}, (n)}$}
When $X^n$ is stochastic, so is  $Q_{X^k}^{\sf{emp}, (n)} = Q_{X^k}^{\sf{emp}, (n)} [X^n]$, and for any $a^k \in \mathcal{X}^k$, we have  
\be 
\label{eq: exp of emp dist is kth order dist}
\E \left[ Q_{X^k}^{\sf{emp}, (n)} (a^k) \right]  = Q_{X^k}^{\sf{ave}, (n)} (a^k) . 
\ee 
Note further that, letting $\stackrel{n \rightarrow \infty}{\Longrightarrow}$ denote convergence in distribution, in any scenario where $ Q_{X^k}^{\sf{emp}, (n)} \stackrel{n \rightarrow \infty}{\Longrightarrow}  \mu_{X^k} \ \ \ a.s. $ 
for some PMF on $k$-tuples  $\mu_{X^k}$, we also have, by (\ref{eq: exp of emp dist is kth order dist}) and the bounded convergence theorem, $ Q_{X^k}^{\sf{ave}, (n)} \stackrel{n \rightarrow \infty}{\Longrightarrow}  \mu_{X^k} .$ 
Thus, convergence of $Q_{X^k}^{\sf{emp}, (n)}$ is stronger than (implies) convergence of $Q_{X^k}^{\sf{ave}, (n)}$. 

% \subsection{Empirical Distribution of Good Codes} 
% A good rate-constrained source code is a code approaching a point on the rate-distortion curve. Following the analogous findings for good channel codes (a code approaching capacity with a vanishing probability of error) \cite{shamai1997empirical}, Weissman and Ordentlich \citep{ordentlichweissman} proved that the empirical distribution of any good rate-constrained source code approaches to the joint distribution attaining a point in the rate-distortion curve whenever this joint distribution is unique. In the same work, they also proved the denoising property of good lossy compressors, which will be covered in more detail in later sections and will be the basis of our work.      

\section{Samples from the Posterior via Noisy Lossy Compression}
\label{samples_posterior}
 Consider the canonical setting where  the components of the noise-free $\bX$, noisy $\bZ$, and reconstructed sources $\bhX$ in Fig.~\ref{fig:framework} all take values in the same finite $Q$-ary alphabet $\mathcal{A} = \{0,1, \ldots, Q-1\}$.  The noise-free source $\mathbf{X} = (X_1, X_2, \ldots )$ is stationary ergodic and corrupted by additive memoryless  noise $\bN$. That is, we assume the components of the noisy observation process  
$\mathbf{Z}$ are given by 
\be \label{eq: noise model} 
 Z_i = X_i + N_i ,  
 \ee 
where the $N_i$s are IID$\sim N$, independent (collectively) of $\mathbf{X}$, and addition in (\ref{eq: noise model}) is in the mod-$Q$ sense\footnote{The framework and results have natural analog analogues, where the alphabet can be the real line or any Euclidean space and addition is in the usual sense. We assume here the finite alphabet setting for concreteness, for avoiding unnecessary technicalities, and because it is better connected to practice where the alphabets are ultimately finite.}.  We assume the distribution of the noise to be ``non-singular'' in the sense that the Toeplitz matrix whose rows are shifted versions of the row vector representing the PMF of $N$ is invertible, a benign condition
guaranteeing a one-to-one correspondence between the distributions of the noise-free and noisy sources \cite{dudepaper}.     We construct a difference distortion measure $\rho_N: \mathcal{A} \rightarrow [0, \infty]$ from the distribution of the noise according to 
\be 
\label{eq: distortion induced by noise}
\rho_N (a) = \log \frac{1}{\Pr (N=a)} . \ee 
This construction is such that the distribution of the noise has the maximum entropy with respect to $\rho_N$. That is, defining the max-entropy function induced by $\rho_N$ as 
\be 
\phi_N (D) = \max \{  H(\tilde{N}) : E \rho_N ( \tilde{N}) \leq D \} , 
\ee
where the maximization is over random variables $\tilde{N}$ supported on $\mathcal{A}$ and satisfying the indicated constraint, $\phi_N (H(N))$ is readily shown to be  attained by $N$ (i.e.\ $\phi_N (H(N)) = H(N)$) uniquely (cf., e.g., \cite{ordentlichweissman}).  
Good lossy compression of the noisy source  $\mathbf{Z}$ under this distortion criterion at distortion level equal to the entropy of the noise ($D=H(N)$) turns out to result in reconstructions $\hX$ that are \emph{samples from the posterior} of the noise-free source $\bX$ given the noisy source $\bZ$. In particular, the finite-dimensional distributions of these reconstructions converge to those of the underlying noise-free source.   
 We state this phenomenon  rigorously in the theorem below. 
   %, which follows directly from Part 3 of \cite[Theorem 9]{ordentlichweissman} combined with \cite[Theorem 4]{ordentlichweissman}. 
 %In the following, 
 ``Good code'' refers to a sequence of compressors, indexed by block-lengths, with respective rates and distortions converging to a point on the rate-distortion curve. 

  \begin{theorem}
  %[\cite{ordentlichweissman}] 
  \label{theorem: emp dist of good codes for noisy data}
  Suppose $\mathbf{X}$ is a stationary ergodic process. Let $\{ \hX^n \}_{n \geq 1}$ be the reconstructions associated with 
a good code for the source $\mathbf{Z}$ with respect to the difference distortion function in (\ref{eq: distortion induced by noise}), at distortion level $H(N)$. 
For any finite $k$ and $n \geq k$, let $Q_{Z^k,\hX^k}^{\sf{ave}, (n)} = Q_{Z^k,\hX^k}^{\sf{ave}, (n)} [ P_{Z^n, \hX^n} ]$ and 
$Q_{Z^k,\hX^k}^{\sf{emp}, (n)}  = Q_{Z^k,\hX^k}^{\sf{emp}, (n)} [Z^n, \hX^n]$ denote, respectively, the $k$-th order joint distribution induced by $P_{Z^n, \hX^n}$ and the (random) $k$-th order joint distribution induced by the realized $(Z^n, \hX^n)$.  Then  \be Q_{Z^k,\hX^k}^{\sf{emp}, (n)}  \stackrel{n \rightarrow \infty}{\Longrightarrow}  P_{Z^k, X^k}
\ \ a.s. \ee and a fortiori 
   \be Q_{Z^k,\hX^k}^{\sf{ave}, (n)}  \stackrel{n \rightarrow \infty}{\Longrightarrow}  P_{Z^k, X^k}, \ee
where $P_{Z^k, X^k} $ is the joint $k$th-order distribution of the noisy and original noise-free source. 
\end{theorem}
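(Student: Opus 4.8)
The plan is to reduce the statement to the empirical-distribution theorem for good source codes of \cite{ordentlichweissman}, which asserts that the $k$-th order empirical joint distribution of $(Z^n,\hX^n)$ produced by a good code converges a.s.\ to the \emph{unique} joint law attaining the operative point on the rate--distortion curve of $\bZ$, provided that optimal joint is unique. The whole argument then rests on two ingredients: (i) identifying that optimal coupling explicitly as the noisy/noise-free joint $P_{Z^k,X^k}$, and (ii) establishing its uniqueness, which is exactly the hypothesis needed to invoke \cite{ordentlichweissman}. Granted these, the a.s.\ convergence $Q_{Z^k,\hX^k}^{\sf{emp},(n)}\stackrel{n\to\infty}{\Longrightarrow}P_{Z^k,X^k}$ is immediate, and the \emph{a fortiori} statement for $Q_{Z^k,\hX^k}^{\sf{ave},(n)}$ follows from the identity (\ref{eq: exp of emp dist is kth order dist}) together with the bounded convergence theorem, precisely as observed in Section~\ref{preliminary}.

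First I would pin down the operative point on the rate--distortion curve of $\bZ$ under $\rho_N$ at distortion $H(N)$ via the Shannon lower bound. With the innovation $W^n = Z^n - \hX^n$ (mod $Q$), the chain-rule identity $\tfrac1n I(Z^n;\hX^n)=\tfrac1n[H(Z^n)-H(W^n\mid\hX^n)]$, together with subadditivity of entropy, the concavity and monotonicity of $\phi_N$ with Jensen's inequality, and the stated identity $\phi_N(H(N))=H(N)$, gives for every test channel meeting the distortion constraint
\be
\tfrac1n I(Z^n;\hX^n) \ge \tfrac1n H(Z^n) - H(N).
\ee
Letting $n\to\infty$ yields $R(H(N))\ge \bar H(\bZ)-H(N)$. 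I would then exhibit the coupling $\hX^n = X^n$: here $W^n = N^n$ is i.i.d.\ $\sim N$ and independent of $\hX^n$, the distortion equals exactly $H(N)$, and the rate is $\tfrac1n I(Z^n;X^n)=\tfrac1n H(Z^n)-H(N)\to\bar H(\bZ)-H(N)$. Hence the bound is tight and the process $(\bZ,\bX)$, whose $k$-th order law is $P_{Z^k,X^k}$, is an optimal coupling achieving $R(H(N))$.

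The main obstacle is the uniqueness in step (ii). I would track the equality conditions in the displayed chain: equality forces (a) the $W_i$ to be independent, (b) $W^n$ independent of $\hX^n$, and (c) each $W_i$ to attain $\phi_N$ at distortion $H(N)$. By the \emph{unique} attainment of $\phi_N(H(N))$ by $N$, condition (c) pins down $W_i\sim N$, so jointly with (a)--(b) the optimal forward channel must be the additive memoryless-$N$ channel, i.e.\ $Z^n=\hX^n+\tilde N^n$ with $\tilde N^n$ i.i.d.\ $\sim N$ independent of $\hX^n$. Comparing with $Z^n=X^n+N^n$ and invoking the non-singularity (Toeplitz invertibility) of the noise law \cite{dudepaper}, which makes $P_{X^n}\mapsto P_{Z^n}$ one-to-one, I conclude $P_{\hX^n}=P_{X^n}$ and hence that the optimal $k$-th order joint is uniquely $P_{Z^k,X^k}$. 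The delicate points are to argue these equality conditions at the process level and to deduce equality of the \emph{joints} $P_{Z^k,\hX^k}=P_{Z^k,X^k}$ (not merely the marginals of $\hX$); I expect this deconvolution/uniqueness analysis to be the crux, after which the reduction to \cite{ordentlichweissman} and (\ref{eq: exp of emp dist is kth order dist}) closes the proof routinely.
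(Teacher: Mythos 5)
Your proposal is correct and follows essentially the same route as the paper's proof: the same $\phi_N$-based Shannon-lower-bound chain of inequalities showing $\frac{1}{k}I(Z^k;\hX^k) \geq \frac{1}{k}H(Z^k) - H(N)$, tightness exhibited by the coupling $\hX = X$, uniqueness of the optimizer via the unique attainment of $\phi_N(H(N))$ and the Toeplitz non-singularity of the noise law, and then the reduction to the empirical-distribution theorem of \cite{ordentlichweissman} (the paper invokes Parts 2 and 3 of its Theorem 9, for which it also records the relation $R(Z^k,H(N)) = R(\mathbf{Z},H(N)) + \frac{1}{k}H(Z^k) - \bar{H}(\mathbf{Z})$, which your process-level computation yields implicitly). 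The only difference is one of emphasis: you spell out the equality-condition and deconvolution analysis behind uniqueness, which the paper asserts more briefly.
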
 
In particular, and most relevant for our purposes, the finite-dimensional distributions of \emph{lossy reconstructions of the noisy source} converge to those of the underlying \emph{noise-free source}.   \newline 

\begin{proof}
Let $R(Z^k, H(N))$ denote the $k$th-order rate-distortion function of $\mathbf{Z}$ at distortion level $H(N)$:
    \be
    R(Z^k, H(N)) = \min_{\E{[\frac{1}{k} \sum_{i=1}^k\rho_N(Z_i-\hat{X}_i)]} \leq H(N)} \frac{1}{k} I(Z^k; \hat{X}^k).
    \ee
Any pair $(Z^k, \hat{X}^k)$ within the feasible set would satisfy: 
\begin{align}
    \begin{aligned}
    I(Z^k;\hat{X}^k) & = H(Z^k) - H(Z^k|\hat{X}^k) \\
                    & = H(Z^k) - H(Z^k-\hat{X}^k|\hat{X}^k) \\
                    & \geq H(Z^k) - H(Z^k - \hat{X}^k) \\
                    & \geq H(Z^k) - \sum_{i=1}^k H(Z_i-\hat{X}_i) \\
                    & \stackrel{(a)}{\geq} H(Z^k) - \sum_{i=1}^k \phi_N \left( E \rho_N (Z_i-\hat{X}_i) \right) \\ 
                    & \stackrel{(b)}{\geq} H(Z^k) - k \phi_N \left( \frac{1}{k} \sum_{i=1}^k E \rho_N (Z_i-\hat{X}_i) \right) \\ & \stackrel{(c)}{\geq} H(Z^k) - k \phi_N \left( H(N) \right) \\
                    & = H(Z^k) - k H(N)
    \end{aligned}
    \label{eq_proof_4.1}
\end{align}
where (a) is by the definition of the function $\phi_N$, (b) by its (readily verified) convexity, (c) by its monotonicity along with the fact that the pair $(Z^k, \hat{X}^k)$ is in the feasible set,  and the last equality is due to the aforementioned property $\phi_N (H(N)) = H(N)$.    
 On the other hand, the pair $(Z^k, X^k)$ satisfies all the inequalities in (\ref{eq_proof_4.1}) with equality since $\mathbf{Z}$ is constructed as $Z_i = X_i + N_i$ and $N_i$s are IID$\sim N$, independent of $\mathbf{X}$. Furthermore, the invertibility of the Toeplitz matrix representing the PMF of $N$ (stipulated earlier) guarantees the uniqueness of the distribution  satisfying the feasibility condition and achieving $R(Z^k, H(N))$. Therefore, the $k$th-order rate-distortion function of $\mathbf{Z}$ with distortion level $H(N)$ is
\be
\label{eq: for of RZK at H(N)} 
R(Z^k, H(N)) = \frac{1}{k} H(Z^k) - H(N)
\ee
and it is uniquely achieved in distribution by the pair $(Z^k, X^k)$.  Taking $k \rightarrow \infty$ we also obtain 
\be
\label{eq: for of RZbold at H(N)} 
R( \mathbf{Z}, H(N)) =  \bar{H}(\mathbf{Z}) - H(N), 
\ee
where $\bar{H}(\mathbf{Z})$ is the entropy rate of $\mathbf{Z}$. Combining \eqref{eq: for of RZK at H(N)} and \eqref{eq: for of RZbold at H(N)} yields 
\be 
R(Z^k, H(N)) = 
R( \mathbf{Z}, H(N)) +  \frac{1}{k} H(Z^k) - 
\bar{H}(\mathbf{Z}). 
\ee
Thus, Part 2 and Part 3 of \cite[Theorem 9]{ordentlichweissman} are satisfied by the process $\mathbf{Z}$ for distortion measure $\rho_N$ at distortion level $H(N)$,   and we therefore have 
\be
 Q_{Z^k,\hX^k}^{\sf{emp}, (n)}  \stackrel{n \rightarrow \infty}{\Longrightarrow}  P_{Z^k, X^k}.
\ee
\end{proof}

%\emph{Proof of Theorem \ref{theorem: emp dist of good codes for noisy data}:} 

%LET'S DETAIL EXPLICITLY HERE HOW IT  follows from THE results in  \cite{ordentlichweissman} CAN DISCUSS WHEN WE MEET TOMORROW.  

%%%%%%%%%%%%%%%%%%%%%%%%%%%%%%%%%%%%%%%%%%%%%%%%%%%%%%%%%%%%%%%%%%%%%%%%%%%%

\section{Application for Learning}
\label{application}

% In this section, we concretely pose and justify the preprocessing depicted in Fig.~\ref{fig:framework} as a framework for learning, building on the results from Section~\ref{samples_posterior}: In Subsection~\ref{LCON_framework}, we detail our framework; in Subsection~\ref{LCON_memory}, we provide data-efficiency and high-utility guarantees building on Theorem~\ref{theorem: emp dist of good codes for noisy data}; in Subsection~\ref{LCON_privacy}, we present the privacy guarantees ;  
% %under the inference threat model ( \cite{du2012privacy, makhdoumi2014information}; 
% in Section~\ref{noise_details}, we discuss the choice of the noise distribution  and, finally, in Section~\ref{robustness}, we discuss its effect on robustness against adversarial data. 

\subsection{Learning with Lossily Compressed Noisy (LCoN) Examples}
\label{LCON_framework}
Consider first the standard framework of unsupervised learning from $M$ non-labeled examples $\{ X^{n , (i)}   \}_{i=1}^M$, drawn IID $\sim X^n$.     The $i$th example comprises the  data point/signal/image  $X^{n , (i)}$,  which is an $n$-tuple with $\mathcal{A}$-valued components.
%and the labels $L_i$ takes values in a finite alphabet of labels $\mathcal{L}$. 
Our data pre-processing method,  illustrated in Fig.~\ref{fig:framework}, comprises noise injection and lossy compression  to obtain and store the lossily compressed noisy (LCoN) examples, as follows: 

\begin{enumerate}
    \item Pick a distribution for the noise $N$ (we discuss the choice of distribution later). Inject  IID$\sim N$ noise components to each component of each of the $X^{n , (i)}$s. Denote the noisy examples as $Z^{n , (i)}$, which are IID $\sim Z^n$, the noisy version of $X^n$. 
    
    \item Pick a good lossy compressor for the distortion function $d(z^n, \hat{x}^n)=\frac{1}{n} \sum_{i=1}^n \rho (z_i-\hat{x}_i) $ where $\rho(\cdot)$ is the distortion measure in (\ref{eq: distortion induced by noise}) and for distortion level equal to the entropy of the noise, i.e., $D
  %  =\E{d(z^n, \hat{x}^n)}
    =H(N)$. Jointly compress all the noisy data. Denote the reconstructions from the lossy compression of $Z^{n , (i)}$s as $ \hX^{n , (i)} $. 
    
    \item Use $ \hX^{n , (i)} $ instead of the $ X^{n , (i)} $ for learning. 
\end{enumerate}

Although the above describes jointly compressing all the data, %associated with the same label value, 
one may also consider a more practical version where each example is compressed separately, as we elaborate below.     

\subsection{Data Efficiency while Retaining the Right Distribution}
\label{LCON_memory}
 
 What will be the cost of storing the compressed noisy data? Assuming the compressors employed are ``good'' in the sense of the previous section, it follows by invoking \cite[Theorem 4]{ordentlichweissman} 
 %separately on each subset of the data pertaining to each label value 
 that, in the limit $M \rightarrow \infty$ of a large amount of training data, we will need a rate of 
 \be 
 \label{eq: rate needed}
 \frac{1}{n} H(Z^n) - H(N) \ \ \ \frac{\mbox{bits}}{\mbox{data component}},  
 \ee
 namely the rate distortion function of the IID $\sim Z^n$ source at distortion level $H(N)$, i.e., achieving the \emph{theoretically optimal compression performance}.
  Furthermore, Theorem~\ref{theorem: emp dist of good codes for noisy data} assures us that 
 $\{ \hX^{n , (i)}  \}_{ 1 \leq i \leq M  }$ will have an empirical distribution converging to the distribution of  $X^{n}$  
 %conditioned on label $\ell$ 
 %in the limit of many examples 
 when $M \rightarrow \infty$. Thus, overall,  the empirical distribution of 
  $\{ \hX^{n , (i)} \}_{i=1}^M$ converges in distribution to the right one, namely that of $X^n$.  
 Therefore, in the limit of many training examples, performing the learning on $\{ \hX^{n , (i)} \}_{i=1}^M$ should be as good as performing it on the original noise-free data  $\{ X^{n , (i)}   \}_{i=1}^M$. 
 
%  learning over these reconstructions $ \hX^{n , (i)} $ would be equivalent to learning from the clean examples $ X^{n , (i)} $ in the limit (as $M \rightarrow \infty$). This justifies the high utility of our framework.     
 
 We note that the foregoing discussion was valid for a fixed $n$ and an arbitrarily distributed $X^n$, in the $M \rightarrow \infty$ limit. It is also meaningful to consider a fixed $M$ in the large $n$ limit. Indeed,  when it is reasonable to think of the   generic  $X^n$ governing the data as the first $n$ components of a stationary ergodic process,  
 %(conditioned on each of the possible label values),  
 even if we merely employ good compressors separately on each example, 
 Theorem \ref{theorem: emp dist of good codes for noisy data} guarantees that the reconstructions will tend to be loyal to the original data in the sense of their finite-dimensional distributions, when $n$ is large. The assumption of a stationary ergodic process governing the examples may be natural in a variety of applications, such as when the $X^{n , (i)}$s represent audio signals or text.  Also, things carry over naturally to multi-dimensionally indexed data, e.g., when the $X^{m \times n , (i)}$s represent images sampled from the generic  $X^{m \times n}$, representing the  $m \times n$ grid of samples from a (spatially) stationary ergodic random field.

 \subsection{Privacy}
\label{LCON_privacy}
We would also like to guarantee that the database retained for the learning  does not leak too much information about any of the individual examples. To this end, we consider the (normalized) mutual information between the two, 
known as the \emph{privacy leakage},  which comes with a variety of operational justifications on top of its intuitive appeal (cf. \cite{makhdoumi2014information, du2012privacy, makhdoumi2013privacy}, references therein and thereto). 
 For each $i$, we have  \begin{align}
   \frac{1}{n} I \left( X^{n, (i)}; \left\{ \hX^{n, (j) } \right\}_{j=1}^M  \right)  & \stackrel{(a)}{\leq}    
   \frac{1}{n} I \left( X^{n, (i)}; \left\{ Z^{n, (j) } \right\}_{j=1}^M  \right) \nonumber \\
     & = \frac{1}{n} I \left( X^{n, (i)};  Z^{n, (i) }   \right) \\
     & = \frac{1}{n} I \left( X^{n};  Z^{n }   \right) \\
     & = \frac{1}{n} H \left(   Z^{n }   \right) - H(N) , \label{eq: upper bound on the privacy leakage}
\end{align}
where the inequality is due to data processing and the two equalities follow by  $\left( X^{n, (i)},  Z^{n, (i)} \right)$s being IID $\sim ( X^{n},  Z^{n }  )$. The inequality (a) will in general be quite loose  as the compression of the noisy examples is lossy, e.g., the lossy compressor makes the noisy image less recognizable in Fig.~\ref{fig:framework}. Tighter (and better) bounds on the privacy leakage could be attained when considering specific compressors. 

We now briefly discuss the privacy guarantees in terms of differential privacy -- a worst-case property as opposed to the average-case mutual information metric.  
%\remark{
%Furthermore,
\subsubsection{Differential Privacy}
%\begin{remark}
With specific noise distributions such as Gaussian and Laplacian, we can quantify the \emph{local differential privacy} guarantees as well. For instance, in the experiments in Section~\ref{exp_gaussian}, we inject iid Gaussian noise to the images in the CelebA dataset. %Since the pixel values are bounded, the implemented scheme is $(\epsilon, \delta)$-differentially private. Similarly, if we inject iid Laplacian noise to the image data (or any bounded data), we would preserve $\epsilon$-differential privacy -- a stronger guarantee than the Gaussian case.
Assuming that each image in the CelebA dataset was provided by a different user (i.e. each user released a single image), the implemented scheme satisfies $(\epsilon, \delta)$-differential privacy when the variance of the injected Gaussian noise is $\sigma^2 = \frac{2 \log{ \left (2 / \delta \right )}}{\epsilon^2}$ \cite{dwork2006our} \footnote{We take the sensitivity as $\Delta = 1$.}. Similarly, in Section~\ref{exp_laplacian}, we inject iid Laplacian noise $\text{Lap}(x;b) = \frac{1}{2b} e^{-|x|/b}$ to the images and provide $\epsilon$-differential privacy when the parameter of the injected Laplacian noise is $ b = \frac{1}{\epsilon} $. If each user had released $k$ images in the CelebA dataset, then a slightly weaker differential privacy guarantee would be satisfied because each user would effectively release $k$ samples (or $k$ queries) from their sensitive data.\footnote{This is different from the group privacy where there are groups of ``correlated'' samples from \emph{different} individuals.} While each query individually meets the differential privacy guarantee given above, the composition of $k$ queries may leak more information and hence degrades the differential privacy guarantee. Under this scenario, by the composition theorem for differential privacy \cite{kairouz2015composition}, $(\epsilon, \delta)$-differential privacy is satisfied by corrupting the data with Gaussian noise with variance $\sigma^2 = \frac{8k \log{\left ( e + (\epsilon / \delta)\right )}}{\epsilon ^2}$ or with Laplacian noise with variance $  2b^2 = \frac{8k \log{\left ( e + (\epsilon / \delta)\right )}}{\epsilon ^2}$, where $k$ is the number of images released by each user.

%\end{remark}

 \subsection{Choice of the Noise Distribution}
\label{noise_details}
How should one choose the distribution of the noise? 
  The higher its entropy, the smaller the respective compression rate and upper bound on the privacy leakage in    
 (\ref{eq: rate needed}) and (\ref{eq: upper bound on the privacy leakage}) so, in principle, we get simultaneously  better compression \emph{and} more privacy. In fact, one could get both the compression rate and privacy leakage arbitrarily small with a noise distribution sufficiently close to uniform\footnote{Uniform itself is not allowed as per the stipulation of the noise distribution being non-singular.} since both (\ref{eq: rate needed}) and (\ref{eq: upper bound on the privacy leakage}) are upper bounded by  
 \be
 \log{|\mathcal{A}|} - H(N).
 \ee
 
 %In fact, since both (\ref{eq: rate needed}) and (\ref{eq: upper bound on the privacy leakage}) are upper bounded by 
 %\be 
 %\log{|\mathcal{A}|} - H(N),
 %\ee
 %one could get both the compression rate and privacy leakage arbitrarily small with a noise distribution sufficiently close to uniform\footnote{Uniform itself is not allowed as per the stipulation of the noise distribution being non-singular.}. 
 
 The choice of noise distribution, however, affects the convergence rate in large $n$ and  $M$ limits. % discussed above.  
 As a result, in practice, when both $n$ and $M$ are finite, there is a tension  between 
getting good (low) compression rate plus privacy leakage  and the quality (proximity to the true distribution) of the reconstructions. One might envision turning a knob sweeping through noise distributions to find a good sweet-spot. A more principled understanding of this point is left for future work.

%  Theorem~\ref{thm:noise} implies that noise with high entropy simultaneously improves the data efficiency and privacy by upper bounding the rate and privacy leakage. While the utility must be preserved regardless of the noise distribution according to Theorem~\ref{theorem: emp dist of good codes for noisy data} and Section~\ref{LCON_memory}; \emph{in practice} with finite $M$ and finite $n$, we would see a utility loss. Therefore, a noise distribution with a reasonable entropy must be chosen based on the data efficiency and privacy requirements. 

 \subsection{Supervised Learning}  
 The foregoing framework and results carry over straightforwardly to the case 
 %of learning with labelled examples, namely 
 when the noise-free data come as $M$ \emph{labeled} examples $\{ ( X^{n , (i)} , L_i )  \}_{i=1}^M$, drawn IID $\sim (X^n, L)$, where   the labels $L_i$ take values in a finite alphabet of labels $\mathcal{L}$. 
 In this case, we apply the operations and arguments discussed above separately on each subset of the data pertaining to each label value.  The experimental results of Section~\ref{experiments} are in this setting. 
 
 \subsection{When Compression is Not Matched to the Noise}
 The following addresses many of the natural scenarios arising in practice where the lossy compression is tailored for a distortion function and/or level \emph{not matched} to the added noise characteristics. 
 
  \begin{corollary}
 Suppose the added noise  is  decomposable as     $N = U + W$, where $U$ and $W$ are independent. If a good code for the source $\bZ$ with respect to  $\rho_W$  at distortion level $H(W)$ is utilized then
 \be
 Q_{Z^k,\hX^k}^{\sf{emp}, (n)}  \stackrel{n \rightarrow \infty}{\Longrightarrow}  P_{Z^k, \tilde{X}^k}  \ \ a.s. 
 \ee
  and a fortiori 
  \be
  Q_{Z^k,\hX^k}^{\sf{ave}, (n)}  \stackrel{n \rightarrow \infty}{\Longrightarrow}  P_{Z^k, \tilde{X}^k},
  \ee
  where $\mathbf{\tilde{X}} = \mathbf{X} + \mathbf{U}$ (with $\mathbf{U}$ IID$\sim U$ and independent of $\mathbf{X}$)    
    is the partially noisy source and $P_{Z^k, \tilde{X}^k}$ is the joint kth-order distribution of the noisy and  partially noisy source. 
  \label{theorem: partial noise}    
   \end{corollary}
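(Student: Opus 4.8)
The plan is to recognize that the corollary is an instance of Theorem~\ref{theorem: emp dist of good codes for noisy data} after relabeling the sources. The crucial observation is that $\bZ$ is itself the additively noise-corrupted version of the partially noisy source $\mathbf{\tilde{X}} = \bX + \mathbf{U}$ under the noise $\mathbf{W}$, since
\[
Z_i = X_i + N_i = X_i + U_i + W_i = \tilde{X}_i + W_i .
\]
Accordingly, I would invoke Theorem~\ref{theorem: emp dist of good codes for noisy data} with $\mathbf{\tilde{X}}$ in the role of the noise-free source, $\mathbf{W}$ in the role of the additive noise, and the matched distortion criterion $\rho_W$ at level $H(W)$ in the role of $(\rho_N, H(N))$. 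Its conclusion then reads verbatim as $Q_{Z^k,\hX^k}^{\sf{emp}, (n)} \stackrel{n \rightarrow \infty}{\Longrightarrow} P_{Z^k, \tilde{X}^k}$ a.s., and the averaged statement follows a fortiori from (\ref{eq: exp of emp dist is kth order dist}) and the bounded convergence theorem, exactly as established in the preliminaries.

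To set this up rigorously I would first fix a coupling realizing the decomposition: take $\mathbf{U}$ IID $\sim U$ and $\mathbf{W}$ IID $\sim W$, mutually independent and independent of $\bX$, so that $\mathbf{N} = \mathbf{U} + \mathbf{W}$ is IID $\sim N$ and the law of $\bZ$ is unchanged. I would then verify the hypotheses of Theorem~\ref{theorem: emp dist of good codes for noisy data} under this relabeling. That $\mathbf{W}$ is IID is immediate. That the law of $W$ is non-singular follows from the standing assumption on $N$: under mod-$Q$ addition the circulant matrix of the PMF of $N$ is the product of those of $U$ and $W$ (cyclic convolution of PMFs corresponds to the product of circulants), so its invertibility forces invertibility of the $W$ circulant. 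Finally, the distortion setup matches by construction, with $\rho_W(a) = \log(1/\Pr(W=a))$ and $\phi_W(H(W)) = H(W)$ uniquely attained by $W$, which is precisely the structure exploited in the proof of Theorem~\ref{theorem: emp dist of good codes for noisy data}.

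The one genuinely non-routine hypothesis, which I expect to be the main obstacle, is checking that $\mathbf{\tilde{X}} = \bX + \mathbf{U}$ is stationary ergodic, as the theorem demands of its noise-free source. Stationarity is clear. For ergodicity I would work at the level of the joint process $(\bX, \mathbf{U})$: since $\mathbf{U}$ is IID, its shift is mixing, and the product of a mixing system with the ergodic system driving $\bX$ is ergodic; as the coordinate-wise map $(x_i, u_i) \mapsto x_i + u_i$ commutes with the shift, $\mathbf{\tilde{X}}$ is a factor of an ergodic system and hence stationary ergodic. With this established, Theorem~\ref{theorem: emp dist of good codes for noisy data} applies directly and delivers both displayed convergences, completing the argument.
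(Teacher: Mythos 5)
Your proposal is correct and follows essentially the same route as the paper, which proves the corollary in one line by invoking Theorem~\ref{theorem: emp dist of good codes for noisy data} with $\mathbf{X}$ replaced by $\mathbf{\tilde{X}} = \mathbf{X} + \mathbf{U}$ and $\mathbf{N}$ replaced by $\mathbf{W}$. Your additional verifications --- invertibility of the circulant matrix of $W$ inherited from that of $N$, and ergodicity of $\mathbf{\tilde{X}}$ as a factor of the product of an ergodic system with a mixing (IID) one --- are correct and fill in hypotheses the paper leaves implicit.
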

   
   \begin{proof}
    The proof follows from Theorem~\ref{theorem: emp dist of good codes for noisy data} by replacing $\mathbf{X}$ with $\mathbf{\tilde{X}} = \mathbf{X} + \mathbf{U}$ and $\mathbf{N}$ with $\mathbf{W}$. %Let $\mathbf{\tilde{X}}$ be the partially noisy source constructed as $\mathbf{\tilde{X}} = \mathbf{X} + \mathbf{U}$. 
   \end{proof}
 
 Evidently, %, which is merely a restatement of Theorem \ref{theorem: emp dist of good codes for noisy data} with $\mathbf{\tilde{X}}$ substituted for $\mathbf{X}$ and $W$ for $N$,  
  %suggests that, 
  in the scenarios covered by the theorem,  the lossy  compression denoises $\bZ$ only partially. For example, when applied to the case of added Gaussian noise and compression under squared error distortion, the theorem suggests that if compression is done under distortion $D_2$ smaller than the variance $\sigma^2$ of the noise then the reconstructions are effectively samples from the distribution of the noise-free data corrupted by Gaussian noise of variance $\sigma^2 - D_2$. Similarly, in the case of added Laplacian noise with distribution $\text{Lap}(x;b) = \frac{1}{2b} e^{-|x|/b}$ and compression under absolute error distortion, Corollary~\ref{theorem: partial noise} suggests that if compression is done under distortion $D_1$ smaller than $|b|$ then the reconstructions are samples from the distribution of the noise-free data corrupted by a noise with \emph{sparse Laplacian distribution} as follows:
  \be
  \text{SparLap}(x; b, D_1) = \frac{D_1^2}{b^2} \cdot \delta(x) + (1-\frac{D_1^2}{b^2}) \cdot \frac{1}{2b} e^{-|x|/b}.
  \ee
  Notice that the initially added Laplacian noise $N \sim \text{Lap}(x;b)$ can be decomposable as $N=U+W$, where $W \sim~\text{Lap}(x;D_1)$ and $U \sim~\text{SparLap}(x; b, D_1)$. While lossy compression under distortion $D_1$ removes the $W$ component, the data remains partially noisy due to $U$. The implications of this phenomenon for robustness will be explored in future work 
  (and briefly touched on experimentally in the next section).

%   We experiment with such a  
%     The extent to which such scenarios might be desirable   

%   While there could be many interesting applications of Theorem~\ref{theorem: partial noise} such as denoising the noisy data while preserving some adversarial effects, in this paper, we focus on the effect of the remaining noise $N_1$ on the utility of the reconstructed examples for learning. 

\section{Experimental Results}
\label{experiments}

%\begin{figure}[h]
%%\vspace{-6mm}
%    \centering %
%    \includegraphics[width=.35\textwidth]{}
%    \caption{A sample of one original, noise-injected, and lossily compressed image from the CelebA dataset. Here JPEG compression with quality factor 1 is applied to noisy images with $12$ dB PSNR.  }\label{fig:celeba_example}
%    %\vspace{-12mm}
%   \end{figure}

In this section, we test our suggested pipeline 
%empirically demonstrate the improvements in data efficiency, privacy, accuracy, and robustness that our framework %provides. We 
in the context of training a gender classifier on the CelebA dataset \cite{celeba}, consisting of 202,599 face images
of celebrities (cf.\ left image in Fig. \ref{fig:framework} for an example), using the ResNet-34 architecture \cite{he2016deep}. We chose the CelebA dataset since privacy of face images is an emerging concern, cf., e.g., a recent work \cite{yang2021study} studied the effect of face obfuscation in the context of the ImageNet challenge. In Sections~\ref{exp_gaussian} and~\ref{exp_laplacian}, we corrupt the original images in the CelebA dataset with (appropriately discretized) Gaussian and Laplacian noise, respectively.

 \subsection{Gaussian Noise}
 \label{exp_gaussian}
 In the experiments in this section, we inject Gaussian noise to the original images in the CelebA dataset. The induced distortion function in (\ref{eq: distortion induced by noise}) with distortion level being the entropy of the noise essentially boil down to squared error with distortion level being the variance of the added Gaussian noise. The  ``good'' lossy compressor we employ in the experiments, guided by our framework, is JPEG \cite{pennebaker1992jpeg}, which was (arguably approximately) designed with squared error in mind. We tune the compression level so that the squared error distortion approximately matches the variance of the injected noise. 
 
  \begin{figure*}[!h]
%\vspace{-6mm}
    \centering %
        \subfigure[Tested on Noise-free Examples. ]{\includegraphics[width=.47\textwidth]{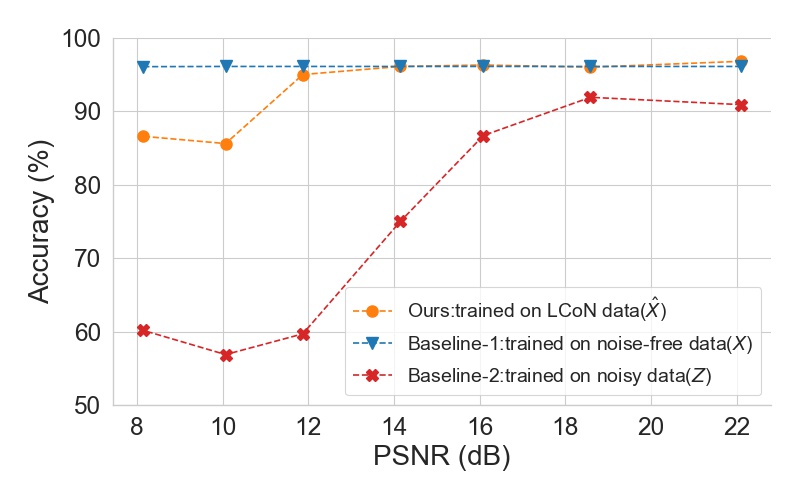}}
        \subfigure[Tested on Noisy Examples. ]{\includegraphics[width=.47\textwidth]{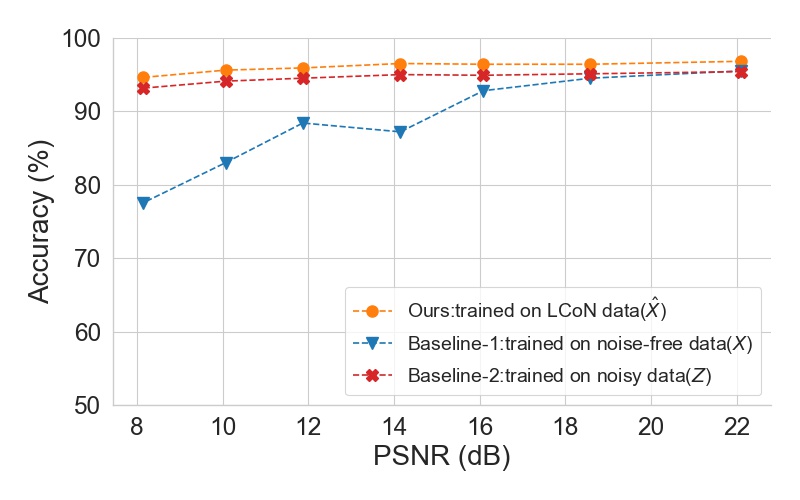}}
        \subfigure[Tested on LCoN Examples. ]{\includegraphics[width=.47\textwidth]{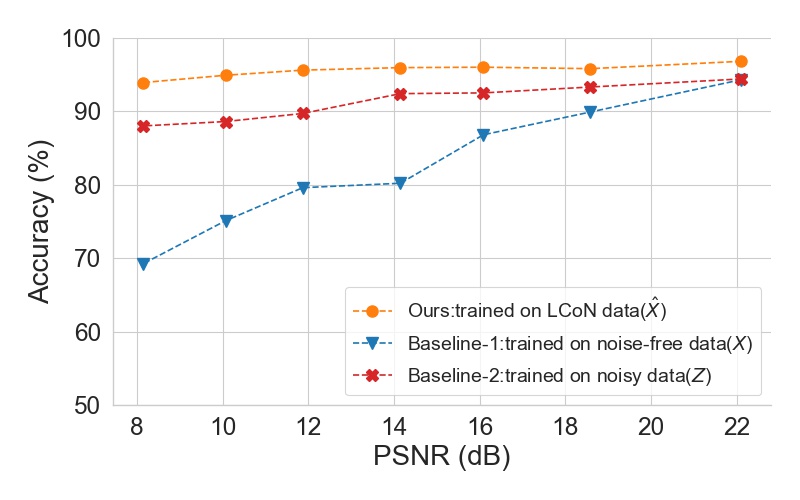}}
        \subfigure[Tested on Adversarial Examples. ]{\includegraphics[width=.47\textwidth]{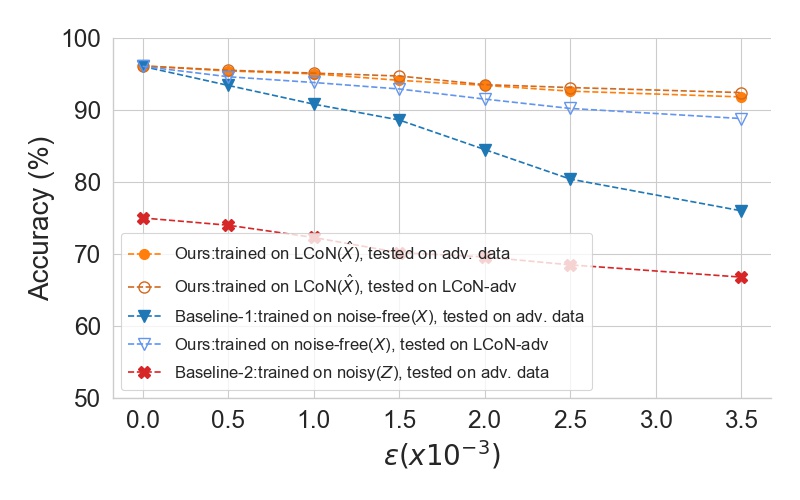}}
    \caption{Images corrupted with Gaussian noise. For the LCoN data, noisy images are compressed via JPEG. Comparison of models trained with LCoN examples, noise-free examples, and noisy examples on (a) noise-free test examples, (b) noisy test examples, (c) LCoN test examples, (d) adversarial test examples. Adversarial images are generated via the Fast Gradient Sign Method (FGSM) \cite{43405}. The compression rates of LCoN for PSNR $[8.1, 10.1, 11.9, 14.2, 16.1, 18.6, 22.1]$ in (a-c) are $[0.131, 0.136, 0.160, 0.170, 0.173, 0.179, 0.200]$, respectively.}\label{fig:comp_plot}
  
   \end{figure*}
   
  In Fig.~\ref{fig:comp_plot}, we compare three training schemes: 
  %, each using the same amount of data in bits
  
\textbf{1) Our Setting - LCoN-train} (orange in Fig.~\ref{fig:comp_plot}): Training over reconstructions $\hX^{n, (i) }$ from \textbf{L}ossy \textbf{Co}mpression of \textbf{N}oisy examples. We call $\hX^{n, (i) }$s as LCoN-pre-processed examples. This setting comes with guarantees  on the privacy leakage 
    and storage cost of the data, as established in the previous section. Fig.~\ref{fig:framework} exhibits $X^{n, (i)}, Z^{n, (i)}$ and $\hX^{n, (i) }$ for a randomly chosen $i$ at the specified noise level and corresponding distortion.  
    
  %  We provide an illustrative example of clean, noisy, and lossily compressed noisy images from the CelebA dataset in .
    
\textbf{2) Baseline-1} (blue in Fig.~\ref{fig:comp_plot}): Training over the noise-free examples $X^{n, (i)}$ from the CelebA dataset. 
    %For a fair comparison, we use a subset of the dataset to equate the cost of storing the data to LCoN-train. 
    This method does not preserve privacy since the noise-free data are retained. 
    
\textbf{3)  Baseline-2} (red in Fig.~\ref{fig:comp_plot}): Training over noisy examples $Z^{n, (i)}$, injected with the same noise used in LCoN-train. 
    %We again use a subset of the dataset to equate the cost of storing the data to LCoN-train. 
    This time, the formal privacy guarantee is as good as LCoN-train's (although, in effect, as discussed, LCON-train provides better privacy due to the extra data processing step of compression). 

  After training, we test the respective three neural networks obtained (three for each noise level) on four different datasets: 

    \textbf{1) Noise-free test images} $X^{n, (i)}$ (Fig.~\ref{fig:comp_plot}(a)). 
    
  %  examples, where   with the same noise and distortion levels as used in the training. (Fig.~\ref{fig:comp_plot}-a).     
  
    \textbf{2) Noise-injected test images} $Z^{n, (i)}$ -- with the same noise distribution used for the training data (Fig.~\ref{fig:comp_plot}(b)). 
    
    \textbf{3) LCoN-pre-processed test images} $\hX^{n, (i) }$ -- with the same noise and distortion used for the training data (Fig.~\ref{fig:comp_plot}(c)).

    \textbf{4) Adversarial test images} -- generated via the Fast Gradient Sign Method (FGSM) \cite{43405}  (Fig.~\ref{fig:comp_plot}(d)). 

In Fig.~\ref{fig:comp_plot}(a-c), PSNR refers to the PSNR of the noisy images (as dictated by the noise variance) after noise injection, prior to lossy compression. For a fair comparison, we calibrate the number of examples used by each scheme so that the overall storage cost (in bits) is approximately the same. In other words, in Fig.~\ref{fig:comp_plot}(a-c), the points on the same vertical line (same PSNR, same privacy) are trained with examples requiring the same storage cost by adjusting the number of training examples used. The compression rate for each point is provided in the caption of Fig.~\ref{fig:comp_plot}.  In Fig.~\ref{fig:comp_plot}(d), we vary the parameter $\epsilon$ in FGSM. Recall that FGSM corrupts the data as $x_{\text{adv}} = x + \epsilon \cdot \text{sign}(\nabla_x J)$, where $J$ is the loss function of the downstream task, i.e., the higher $\epsilon$ the more corrupted the adversarial data. In Fig.~\ref{fig:comp_plot}(d), in addition to testing directly on the adversarial data, we test LCoN-train and Baseline-1 on  LCoN-pre-processed adversarial data as well. We denote the pre-processed adversarial data as LCoN-adv (empty markers).

%In our experiments,   So the schemes   

%equivalent in data efficiency, but Baseline-1 differs from LCoN-train and Baseline-2 as it does not preserve privacy. 

We observe that LCoN-train consistently outperforms Baseline-2 in all settings and noise levels. The gap is most significant when the models are tested on the noise-free images (Fig.~\ref{fig:comp_plot}(a)). This behavior is expected in light of  the theory exposed in the previous sections: LCoN examples $\hX^{n, (i) }$ are close in distribution to the noise-free examples $X^{n, (i)}$
%(as Theorem~\ref{theorem: emp dist of good codes for noisy data} asserts) 
so a model trained on LCoN examples should be expected to outperform one trained on the noisy ones $Z^{n, (i)}$. Perhaps less expected is that LCoN-train outperforms Baseline-2 even on the noisy data on which the latter was trained. The comparison to Baseline-1 is also extremely favorable (on top of the fact that Baseline-1 preserves no privacy) essentially across the deck. 
% Although Baseline-1 does not preserve privacy, it still cannot outperform our setting LCoN-train on LCoN, Noisy, or %adversarial test examples. 
Even on the noise-free test data, our method yields essentially the same accuracy as Baseline-1 at sufficiently high PSNR. Remarkably,  our setting reaches $96.8\%$ accuracy for PSNR higher than $20$ dB, which is even higher than the $96.6 \%$ accuracy of the model trained with full noise-free CelebA dataset (not a subset to comply with the storage constraint, as in Baseline-1). Evidently, even when storage is free and privacy is not an issue, LCoN-train is an accuracy booster.   
Finally, Fig.~\ref{fig:comp_plot}(d) shows that LCoN-train is a significant performance booster in the face of adversarially corrupted data as well.  The gap between LCoN-train and the better of the other two benchmarks becomes as large as $16.4\%$ in accuracy. Furthermore, even if the model is trained on the noise-free data,  LCoN pre-processing of the adversarial testing data can result in as much as $13 \%$ of an accuracy boost.  Overall, it seems, LCoN pre-processing is advisable both at training and testing (and at just one of them if the other is fixed).  % with existing pipelines).   

\begin{figure}[h]
    %\vspace{-5mm}
%\vspace{-6mm}
    \centering %
        \includegraphics[width=\columnwidth]{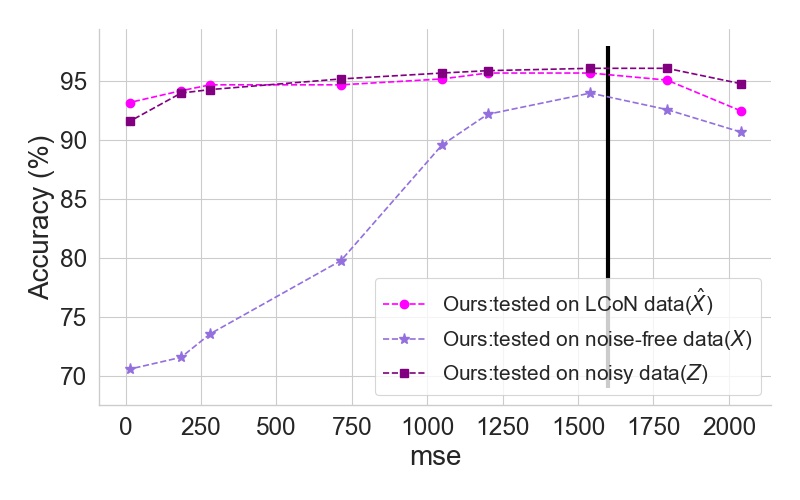}
    \caption{LCoN-train with training data added Gaussian noise $N(0, 1600)$ and compressed with varying distortion levels (mse). Compression is done via JPEG. Vertical line corresponds to mse=$1600$, where the distortion level and the entropy of the noise are matched.}\label{fig:varying_dist_level_plot}
   \end{figure}

Finally, Fig.~\ref{fig:varying_dist_level_plot} shows that the best accuracy (across all test data sets) is obtained when the distortion level (mse) is closest to the entropy of the noise. For all the points in Fig.~\ref{fig:varying_dist_level_plot}, a $N(0, 1600)$ Gaussian noise is added to the training data. The black vertical line corresponds to mse~=$1600$ where the distortion level is matched to the entropy of the noise. As expected from Corollary~\ref{theorem: partial noise}, when the distortion level decreases, the lossily reconstructed examples $\tilde{X}^{n, (i) }$ the model is trained on become more noisy. This results in a model trained on examples with distribution further away from the distribution of the $X^{n, (i) }$s,  explaining the significant accuracy drop, especially on the noise-free test data, when $D<H(N)$. A similar effect occurs  when $D>H(N)$. 

 \subsection{Laplacian Noise}
 \label{exp_laplacian}
   \begin{figure*}[!h]
%\vspace{-6mm}
    \centering %
        \subfigure[Tested on Noise-free Examples. ]{\includegraphics[width=.47\textwidth]{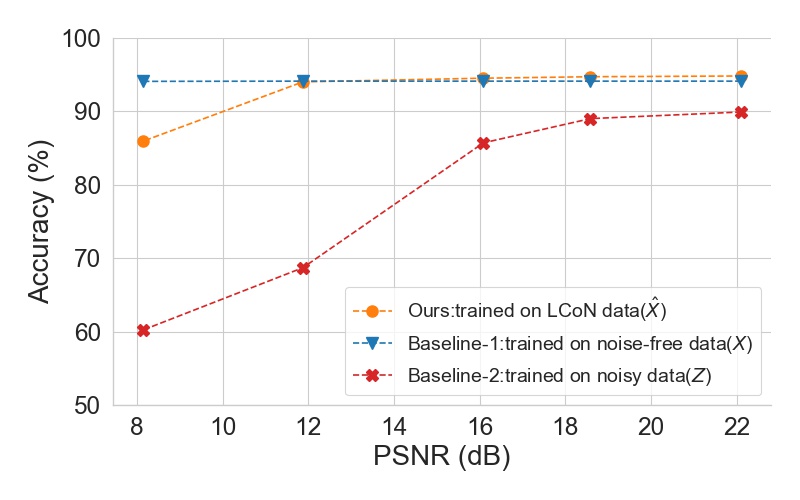}}
        \subfigure[Tested on Noisy Examples. ]{\includegraphics[width=.47\textwidth]{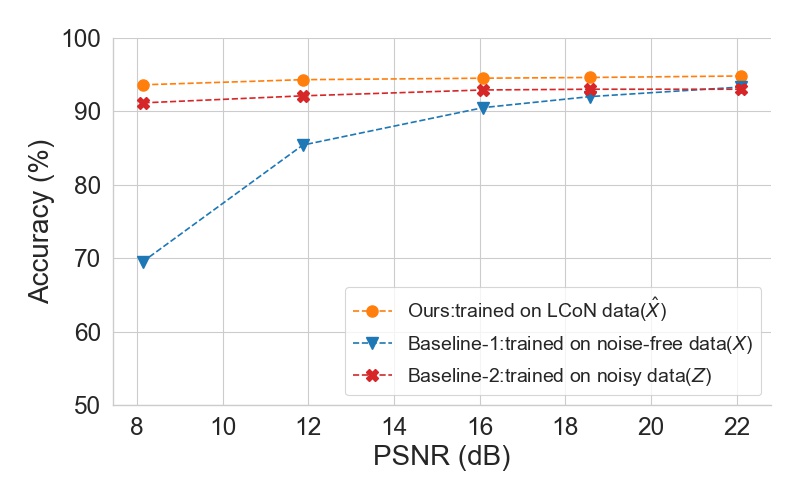}}
        \subfigure[Tested on LCoN Examples. ]{\includegraphics[width=.47\textwidth]{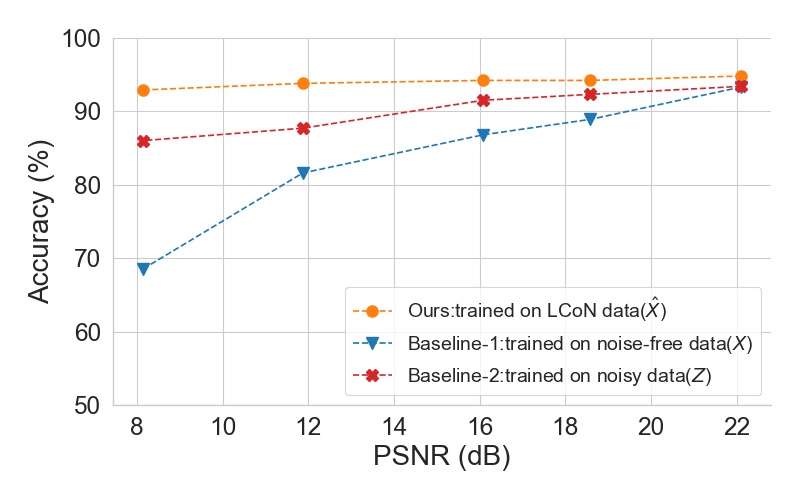}}
        \subfigure[Tested on Adversarial Examples. ]{\includegraphics[width=.47\textwidth]{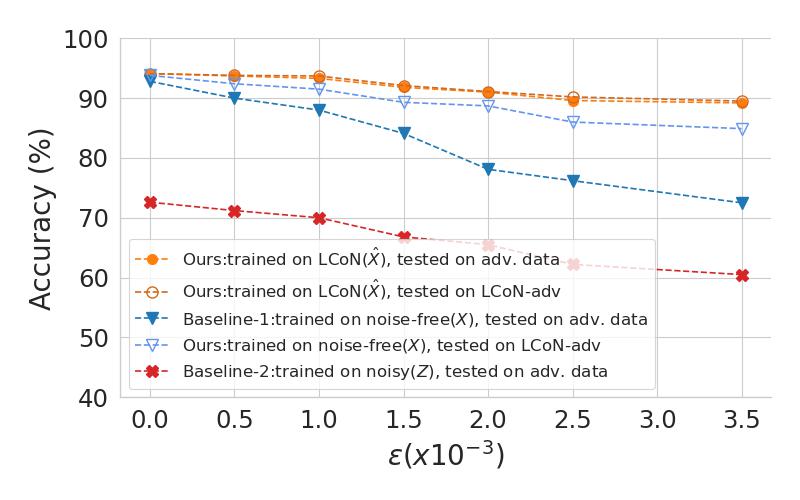}}
    \caption{Images corrupted with Laplacian noise. For the LCoN data, noisy images are compressed via a neural compressor trained to minimize mean absolute error (mae, or $\ell_1$ loss). Comparison of models trained with LCoN examples, noise-free examples, and noisy examples on (a) noise-free test examples, (b) noisy test examples, (c) LCoN test examples, (d) adversarial test examples. Adversarial images are generated via the Fast Gradient Sign Method (FGSM) \cite{43405}. The compression rates of LCoN for PSNR $[8.1, 11.9, 16.1, 18.6, 22.1]$ in (a-c) are $[0.150,  0.172, 0.190, 0.199, 0.212]$, respectively.} \label{fig:comp_plot_lap}
    \end{figure*}
 
 In this section, we show results similar to those in Section~\ref{exp_gaussian} using Laplacian noise instead of Gaussian and a neural image compressor instead of JPEG. Note that, with noise distribution $\text{Lap}(x;b) = \frac{1}{2b} e^{-|x|/b}$, the induced distortion function in (\ref{eq: distortion induced by noise}) with distortion level being the entropy of the noise corresponds to mean absolute error (mae) with the distortion level being $b$. Since we need a compressor that optimizes for mae, we first train a neural image compressor using a subset of the CelebA dataset. Specifically, we follow the end-to-end training approach proposed in \cite{BaLaSi17, Ba18, BaMiSiHwJo18, minnen2018joint} and train a variational autoencoder to minimize a Lagrangian cost function $L(\lambda) = D + \lambda R$, where $D$ is the mae between the original and the predicted image and $R$ is the estimated bit rate using a continuous relaxation of the probability model. We use a version of the open-sourced PyTorch implementation \cite{NDIC} of \cite{BaLaSi17}. By varying the Lagrangian parameter $\lambda$, we train several neural compressors that would generate reconstructions with different distortion levels. Each distortion level matches a particular noise level used in the noise injection step. All in all, for each noise level we try, we use the corresponding trained neural compressor with the right distortion level to compress the noisy images. 
 
 Fig.~\ref{fig:comp_plot_lap} shows the results across various noise levels and verifies that Laplacian noise injection followed by an end-to-end trained neural image compression (trained to minimize mae) provides similar compression, privacy, accuracy, and robustness boosts that we achieve with the Gaussian noise injection followed by JPEG compression in Section~\ref{exp_gaussian}. Notice that the accuracy levels are slightly lower than the ones in Fig.~\ref{fig:comp_plot}. This is because we separate a portion of the CelebA dataset to train the neural compressor and not use it while training the classifier for a meaningful analysis. 

\section{Conclusion and Future Work}
\label{conclusion}
Guided by and combining existing theory on lossy noisy data  
compression and on information-theoretic privacy, we proposed a data pre-processing procedure for both training and testing data which appears to simultaneously boost  data efficiency, privacy, accuracy and robustness. Our theoretical framework has accounted for much of the empirical observations as they pertain to the efficiency (compression), privacy (leakage) and accuracy (due to preservation of the right distribution). The robustness is a welcome additional feature we have observed empirically, and perhaps to be intuitively expected given empirical work showing that noise injection \cite{45818} and image compression \cite{robustness_jpeg1, robustness_jpeg3, robustness_jpeg5}, when applied to adversarial data (each separately), improves robustness. Future work will be dedicated to quantifying this effect via (an extension of) our theoretical framework. From a high-level perspective, LCoN and dithered quantization  \cite{gray1993dithered} have some resemblance in injecting noise prior to compression. However, we employ noise injection, independent of the lossy compression, to preserve privacy while the added noise in dithered quantization is an essential component of the quantization step. We also note that our framework and theoretical insights transfer directly to the case where the data are noise-corrupted to begin with (rather than the noise being deliberately injected).  In such a case, the compression would be tuned to the real noise characteristics. Practically, we plan to further the experiments to other noise distributions and compressors such as  PNG \cite{PNG}, JPEG XR \cite{jpegXR}, WebP \cite{WebP}, sandwiched image compressor \cite{guleryuz2021sandwiched, guleryuz2022sandwiched}, sandwiched video compressor \cite{isik2023sandwiched}, SuRP \cite{isik2022information, isik2021successive}, LVAC \cite{isik2021lvac}, and LFZip \cite{LFZip2020}, which would be equally natural to experiment with, so long as they are appropriately matched (Gaussian noise for compressors designed with squared error in mind, Laplacian noise for compressors optimized for absolute error such as SuRP \cite{isik2022information, isik2021successive}, Uniform distribution on a sub-interval of length equal to the allowed maximum distortion for compressors designed under a maximal distortion criterion such as LFZip \cite{LFZip2020}, etc.). 
 %also to experimentation on other types of data such as audio and text. 
 Better compressors will likely boost the performance under the other criteria as well. 
 
%Practically, we plan to further the experiments to other noise distributions and image compressors (such as  PNG \cite{PNG}, JPEG 2000 \cite{jpeg2000}, JPEG XR \cite{jpegXR}, WebP \cite{WebP} and LFZip \cite{LFZip2020}), which would have been equally natural to experiment with, so long as they are appropriately matched
 %(Gaussian noise for compressors designed with squared error in mind, Laplacian noise for compressors optimized for absolute error, Uniform distribution on a sub-interval of length equal to the allowed maximum distortion for compressors designed under a maximal distortion criterion such as LFZip \cite{LFZip2020}, etc.). 
 %Better compressors will likely boost the performance under the other criteria as well.   

%  As future work, it would be interesting to see the effect of alternative lossy image compression methods. For instance, the image compression method in \cite{dubois2021lossy} could be a good candidate for our framework as it goes beyond minimizing a perceptual loss and aims to preserve the performance on a downstream task. 
 
% With a  
\section{Acknowledgement}
The authors would like to thank Shubham Chandak, Dmitri Pavlichin, Peter Kairouz, and Wei-Ning Chen for helpful discussions and the anonymous reviewers for valuable feedback. This work was supported by a Stanford Graduate Fellowship, a National Science Foundation (NSF) award, and Siemens and Meta research awards.

\bibliographystyle{IEEEtran}
\bibliography{refs}

\end{document}